\newtheorem{theorem}{Theorem}
\newtheorem{lemma}{Lemma}
\theoremstyle{remark}
\newtheorem*{remark}{Remark}
\newcommand{\eqdef}{\vcentcolon=}
\begin{document}

\title{Improved Delay Bound for a Service Curve Element with Known Transmission Rate}


\author{
	\IEEEauthorblockN{Ehsan Mohammadpour, Eleni Stai, Jean-Yves Le Boudec\\
	}
\IEEEauthorblockA{\'Ecole Polytechnique F\'ed\'erale de Lausanne, Switzerland\\
$\{$firstname.lastname$\}$@epfl.ch}}

\maketitle

\begin{abstract}
Network calculus is often used to prove delay bounds in deterministic networks, using arrival and service curves. We consider a FIFO system that offers a rate-latency service curve and where packet transmission occurs at line rate without pre-emption. The existing network calculus delay bounds take advantage of the service curve guarantee but not of the fact that transmission occurs at full line rate. In this letter, we provide a novel, improved delay bound which takes advantage of these two features. Contrary to existing bounds, ours is per-packet and depends on the packet length. We prove that it is tight.
\end{abstract}

\IEEEpeerreviewmaketitle

\section{Introduction}
\label{sec:intro}
In the context of deterministic networking \cite{detnet} or time-sensitive networking \cite{TSN}, delays at network elements have to be bounded in the worst case, not in average. Computing and formally verifying delay bounds is often done by using network calculus \cite[Section 1.4]{le_boudec_network_2001}. For a FIFO network element, this involves two steps. First, an arrival curve, say $\alpha$, is formulated for the aggregated input traffic. Specifically, $\alpha(t)$ is an upper bound on the number of bits that may be submitted by the traffic of interest into the network element within any $t$ time units. The function $\alpha$ depends on the knowledge of the applications that generates the traffic and on the speed at which data can arrive the network element. Second, the details of the inner workings of the network element are abstracted by using a service curve, say $\beta$ (also called ``minimum" service curve). This service curve is typically a rate-latency function, i.e., of the type $\beta(t)=\max(0, R(t-T))$, where $R$ (the rate) and $T$ (the latency) are fixed parameters that are specific to the network element and to the traffic class. An exact definition of service curve can be found in \cite[Section 1.3]{le_boudec_network_2001}, \cite[Section 2.3]{Changbook}, and is recalled in Section~\ref{sec:delay}, Eq.~\eqref{eq:min_service_def}.
Roughly speaking, such a rate-latency service curve means that the input traffic is guaranteed to receive a service rate at least equal to $R$, except for possible service interruptions that may impact the delay by at most $T$ units of time.

Then, a delay bound given by network calculus is the horizontal deviation between the arrival and service curves \cite[Section 3.1.11]{le_boudec_network_2001}, which in this case is
\begin{align}
 \Delta = T+ \sup_{t\geq 0}\left\{ \frac{\alpha(t)}{R}-t \right\}.\label{eq:net_calc_bound}
\end{align}
(In the above formula, $\Delta$ is finite if the supremum is finite, otherwise $\Delta$ is infinite).
This methodology has been successfully applied to many network elements involving a variety of schedulers such as priority schedulers \cite[Chapter~7]{bouillard2018deterministic}, all schedulers that fall in the class of guaranteed rate scheduling \cite{GLV95}, \cite[Section 2.1]{le_boudec_network_2001} (including the widespread deficit round robin scheduler \cite[Chapter~8]{bouillard2018deterministic}), and more recently Audio-Video Bridging \cite{azua_complete_2014} and the Credit Based Shaper \cite{mohammadpour_latency_2018,daigmorte_modelling_2018,zhao_timing_analysis_2018}.

The bound in Eq. \eqref{eq:net_calc_bound} is tight if the only information available is the arrival curve $\alpha$ and the service curve $\beta$. However, in all of the examples we just mentioned, there is an additional information that is not taken into account by Eq. \eqref{eq:net_calc_bound} Specifically, packet transmission occurs at the physical line rate $c$, which is often much larger than the rate $R$ guaranteed by the rate-latency service curve. For example, with a Deficit Round Robin (DRR) scheduler that handles $n$ classes of traffic of equal importance, for every class the rate $R$ is equal to $c/n$. Using both the information of the rate-latency service curve, $\beta$, and of the constant transmission rate, $c$, to compute a delay bound is not straightforward. This was not done before, and is the contribution of this paper.

In this paper, we exploit the information on the transmission rate to provide a bound on the delay at a FIFO system that improves on the network calculus bound in (\ref{eq:net_calc_bound}). Specifically, the bound is per-packet and depends on the packet length. We reached this improved bound by combining the min-plus representation of service curve and max-plus representation of arrival curve \cite{le_boudec_theory_2018}. We show that the bound is tight, at least when the arrival curve is concave.
\section{System Model}\label{sec:sys}
We consider a FIFO system with a queue and a transmission subsystem, as in Fig. \ref{fig:fifo_node}. Upon arrival, packets enter the queue and are stored in FIFO order. A scheduler decides when the packet at the head of the queue is selected for transmission. The scheduler typically arbitrates between this queue and other queues (not shown), therefore the packet at the head of the queue have to wait even if there is no packet of this queue in transmission. When the packet at the head of this queue is selected for transmission, it is transmitted at a constant rate $c$ until it is completely transmitted, i.e., there is no pre-emption. Let $A_n$ be the arrival time of the packet $n$, where the numbering of packets is by order of arrivals, and let $Q_n$ be the time at which packet $n$ is selected for transmission. The FIFO assumption means that $Q_n \leq Q_{n+1}$. Let $l_n$ be the length of packet $n$ and $L_{\max}$ the maximum packet length. The packet $n$ leaves the system at time $D_n = Q_n +\frac{l_n}{c}$. We call $D_n-A_n$ the ``response time" of packet $n$.

Furthermore, we assume that the scheduler is such that the complete system offers to the total flow of all incoming packets a service curve $\beta(t)=\max(0, R(t-T))$ with $R\leq c$. In many cases, the rate $R$ is much less than $c$; this occurs for example when the transmission capacity $c$ is shared between this FIFO system and other subsystems dedicated to other classes of traffic, as in \cite{daigmorte_modelling_2018}.

We also assume that the total flow of all incoming packets is packetized, i.e., we consider that all bits of packet $n$ arrive at the same time instant $A_n$. Furthermore, we assume that the total flow of all incoming packets is constrained by an arrival curve $\alpha$.

\begin{figure}
	\centering
	\includegraphics[width=0.7 \linewidth]{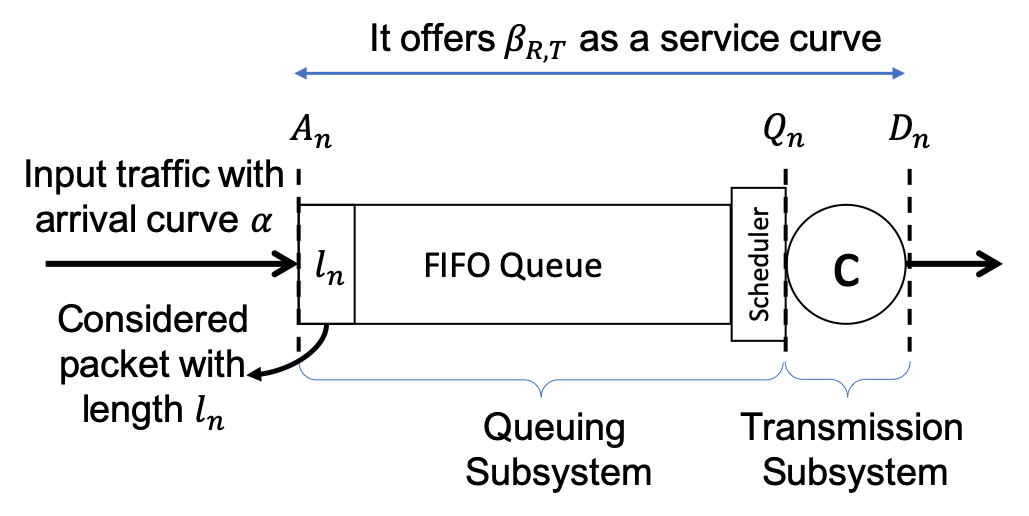}
	\caption{Model of the considered FIFO system.}
	\label{fig:fifo_node}
\end{figure} 
\section{Improved Delay Bound}\label{sec:delay}
In this section, we derive a delay bound for a general FIFO system as described in Section~\ref{sec:sys}.
\begin{theorem}\label{thm:response_fifo}
	(Upper bound on the response time at a FIFO system)
	Consider a FIFO system as in Section~\ref{sec:sys}, i.e., one that offers a rate-latency service curve with parameters ($R$,$T$), and where, as soon as a packet starts to be transmitted, it is transmitted at a constant rate $c\geq R$. Assume that the total input is packetized and has an arrival curve $ \alpha$. For a packet of length $l$, the response time is upper bounded by:
	\begin{align} \label{eq:fifo_response_time_arr_con}
	\Delta_l= \Delta- l \bigg(\frac{1}{R} - \frac{1}{c} \bigg),
	\end{align} where $\Delta$ is the network calculus bound, given in Eq. \eqref{eq:net_calc_bound}.
\end{theorem}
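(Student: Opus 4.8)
The plan is to work with the cumulative arrival and departure functions, call them $R^{in}$ and $R^{out}$, and to bound not the full departure time $D_n$ of a tagged packet $n$ directly, but rather its selection (head-of-line) time $Q_n$. The key idea is that the rate-latency service curve only governs the aggregate, fluid-like progress of the output at rate $R$ up to the instant the packet is selected; from $Q_n$ onward, packet $n$ is emitted at the true line rate $c$, so its contribution to the response time is the genuine transmission time $l_n/c$ rather than the fictitious fluid time $l_n/R$ implicitly charged by Eq.~\eqref{eq:net_calc_bound}. Writing $D_n = Q_n + l_n/c$, the whole improvement $l(1/R-1/c)$ comes from this substitution, so the heart of the matter is to show $Q_n - A_n \le \Delta - l_n/R$.

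To bound $Q_n$, I would first record two bookkeeping facts that follow from FIFO order and the absence of pre-emption: at the selection instant the cumulative output equals the total length of the preceding packets, $R^{out}(Q_n) = L_{n-1}$ with $L_{n-1} = \sum_{i<n} l_i$, and by the arrival ordering the cumulative input satisfies $R^{in}(A_n) \ge L_{n-1} + l_n$. Next I would invoke the (min-plus) service-curve guarantee $R^{out}(Q_n) \ge \inf_{s \le Q_n}\{R^{in}(s) + \beta(Q_n - s)\}$ at time $Q_n$. Since the left-hand side equals $L_{n-1}$, the infimum is at most $L_{n-1}$, which yields a time $s^*$ (up to an $\epsilon$ that is removed at the end) with $L_{n-1} \ge R^{in}(s^*) + R(Q_n - s^* - T)^+$.

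The two guarantees are then combined. A short contradiction argument shows the minimizer must satisfy $s^* \le A_n$: if $s^* > A_n$ then $R^{in}(s^*) \ge R^{in}(A_n) \ge L_{n-1} + l_n$, forcing $L_{n-1} \ge L_{n-1} + l_n$, impossible for $l_n > 0$. Hence $u := A_n - s^* \ge 0$ is a legitimate argument of the arrival curve, and $R^{in}(s^*) \ge R^{in}(A_n) - \alpha(u) \ge L_{n-1} + l_n - \alpha(u)$. Substituting into the service inequality and using $x^+ \ge x$ to avoid a case split on the sign of $Q_n - s^* - T$ yields $R[(Q_n - A_n) + u - T] \le \alpha(u) - l_n$, i.e. $R(Q_n - A_n) \le [\alpha(u) - Ru] - l_n + RT \le \sup_{u\ge 0}\{\alpha(u) - Ru\} - l_n + RT$. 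Dividing by $R$ gives exactly $Q_n - A_n \le \Delta - l_n/R$, and adding $l_n/c$ produces the claimed $\Delta_l$.

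I expect the main obstacle to be conceptual rather than computational: realizing that the service curve must be applied at the selection time $Q_n$ (where the output is pinned to $L_{n-1}$) instead of at the departure time $D_n$, since applying it at $D_n$ only reproduces the rate-$R$ bound and loses the rate-$c$ gain. The remaining care points are standard: handling the possibly unattained infimum via an $\epsilon$-argument, the tie-breaking in arrival times underlying $R^{in}(A_n) \ge L_{n-1}+l_n$, and the degenerate regime $Q_n - s^* \le T$, which the $x^+ \ge x$ trick absorbs automatically. A secondary point is to confirm that $\sup_{u\ge 0}\{\alpha(u)-Ru\} \ge l_n$, which holds because a packetized input forces $\alpha(0^+) \ge L_{\max} \ge l_n$, thereby guaranteeing $\Delta_l \ge 0$ and making the bound meaningful.
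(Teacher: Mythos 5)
Your proposal is correct and follows essentially the same route as the paper: both apply the service-curve guarantee at the selection time $Q_n$ (where the cumulative output is pinned to $\sum_{i<n} l_i$), subtract the tagged packet's own length $l$ from the arrival-curve budget to gain the $-l/R$ term, and then recover the improvement via $D_n = Q_n + l/c$. The only differences are presentational: the paper routes through packet indices, the upper pseudo-inverse $\beta^{\uparrow}$ and the max-plus representation of arrival curves (hence its Lemma~\ref{lemma:right_lim_arr} on right-limits), whereas you work directly with cumulative functions, a contradiction argument for $s^*\leq A_n$ in place of the paper's Lemma~\ref{lemma:min_service_curve}, and an $\varepsilon$-argument for the possibly unattained infimum.
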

\begin{proof}
We use the notation in Section~\ref{sec:sys} and call $l_{k}$ the length of the $k^{th}$ packet with $k=1,2...$. Now, let $n$ be the index of the packet of interest, which has length $l$,  i.e., $l_n=l$. By Lemma \ref{lemma:min_service_curve} there exists an $m\leq n$ such that:
	\begin{align}\label{eq:scheduler_del_4nm}
	\beta(Q_n-A_m) \leq \sum_{k=m}^{n-1} l_k,
	\end{align}
where $\beta(t)=\max(0, R(t-T))$ is the rate-latency service curve.
	Let $\beta^{\uparrow}$ be the upper pseudo-inverse of $\beta$, defined by $\beta^{\uparrow}(t) =\sup \{s\geq 0 | \beta(s)\leq t\}=\inf \{s\geq 0 | \beta(s)>t\}$. By \cite[Section 10.1]{liebeherr_duality_2017}, for a wide-sense increasing function $F(.)$, $F(x)\leq y \Rightarrow x \leq F^{\uparrow}(y)$. Then, from Eq. \eqref{eq:scheduler_del_4nm}, $Q_n-A_m \leq \beta^{\uparrow}\Big(\sum_{k=m}^{n-1} l_k\Big)$.
	Therefore, $Q_n$ satisfies,
	\begin{align}\label{eq:scheduler_del_6}
	Q_n \leq \max_{m\leq n} \Bigg\{A_m+\beta^{\uparrow}\Big(\sum_{k=m}^{n-1} l_k\Big)\Bigg\}.
	\end{align}
The input traffic has an arrival curve $\alpha$, thus by the max-plus representation of arrival curves in \cite[Theorem 1]{le_boudec_theory_2018}:
	\begin{align}\label{eq:scheduler_del_8}
	\sum_{k=m}^{n} l_k \leq \alpha^+(A_n-A_m),
	\end{align}
		\noindent where $\alpha^+$ is the right-limit of $\alpha$. By excluding the packet of interest from the summation in Eq. \eqref{eq:scheduler_del_8}, we obtain:
	\begin{align}\label{eq:scheduler_del_9-arr-2}
	\sum_{k=m}^{n-1} l_k \leq \alpha^+(A_n-A_m) - l.
	\end{align}
	By using Eq. (\ref{eq:scheduler_del_9-arr-2}) in Eq. (\ref{eq:scheduler_del_6}), we have,
	\begin{align}
	Q_n \leq \max_{m\leq n} \Bigg\{&A_m+\beta^{\uparrow}\bigg(\alpha^+(A_n-A_m) - l\bigg)\Bigg\}.
	\end{align}
	By defining $t \eqdef A_n-A_m \geq 0$, we further obtain,
	\begin{align}
	Q_n - A_n \leq \sup_{t \geq 0} &\Bigg\{-t+\beta^{\uparrow}\bigg(\alpha^+(t)- l\bigg)\Bigg\}.
	\end{align}	
Note that $\beta^{\uparrow} (x)=\frac{x}{R} +T$, therefore
	\begin{align}\label{eq:intermpart2}
	Q_n - A_n &\leq \sup_{t \geq 0} \Bigg\{-t+\frac{\alpha^+(t)- l}{R}+T\Bigg\} \nonumber
	\\
	&=T - \frac{l}{R} + \frac{1}{R}\sup_{t \geq 0} \Bigg\{\alpha^+(t)-Rt\Bigg\} .
	\end{align}
By applying Lemma \ref{lemma:right_lim_arr} in Eq. \eqref{eq:intermpart2}, we have:
	\begin{align}\label{eq:intermpart3}
	\nonumber Q_n - A_n &\leq T - \frac{l}{R} + \frac{1}{R}\sup_{t \geq 0} \Bigg\{\alpha(t)-Rt\Bigg\} = \Delta-\frac{l}{R},
	\end{align}
where $\Delta$ is the network calculus bound, given in (\ref{eq:net_calc_bound}). Now observe that $D_n=Q_n+\frac{l}{c}$, which concludes the proof. \end{proof}

	\begin{lemma}\label{lemma:min_service_curve}
		If a FIFO system has (i) $\beta$ as a service curve, and (ii) packetized input, then for every packet $n$, there exists a packet index $m \leq n $ such that, $\beta(Q_n - A_m) \leq \sum_{k = m}^{n-1} l_{k}$, where $l_k$ is the length of the $k^{th}$ packet.
	\end{lemma}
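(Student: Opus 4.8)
The plan is to connect the continuous-time min-plus definition of service curve (Eq.~\eqref{eq:min_service_def}) with the discrete packet indices $m,n$. First I would introduce the cumulative arrival function $x(t)=\sum_{k:\,A_k\le t} l_k$, the total number of bits that have entered the system by time $t$, and the cumulative departure function $y(t)$, the total number of bits that have left by time $t$. Because the input is packetized, $x$ is a step function whose jumps occur exactly at the arrival instants $A_1\le A_2\le\cdots$, the value just before the jump at $A_m$ being $\sum_{k=1}^{m-1} l_k$.

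The key physical observation is that at time $Q_n$ no bits of packets $n,n+1,\dots$ have departed yet: packet $n$ only starts transmission at $Q_n$, and by the FIFO property $Q_n\le Q_{n+1}\le\cdots$ none of the later packets has started either. Hence only packets $1,\dots,n-1$ can contribute, which gives the upper bound $y(Q_n)\le\sum_{k=1}^{n-1} l_k$.

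Next I would apply the service guarantee at time $t=Q_n$: by Eq.~\eqref{eq:min_service_def}, $y(Q_n)\ge\inf_{0\le s\le Q_n}\{x(s)+\beta(Q_n-s)\}$. Since $x$ is constant between consecutive arrivals while $\beta$ is continuous and non-decreasing, the map $s\mapsto x(s)+\beta(Q_n-s)$ is non-increasing on each inter-arrival interval, so the infimum over $[0,Q_n]$ equals $\min_m\{\sum_{k=1}^{m-1} l_k+\beta(Q_n-A_m)\}$, a minimum over the finitely many arrival indices with $A_m\le Q_n$. Chaining this with the upper bound on $y(Q_n)$ produces an index $m$ such that $\sum_{k=1}^{m-1} l_k+\beta(Q_n-A_m)\le\sum_{k=1}^{n-1} l_k$, that is, $\beta(Q_n-A_m)\le\sum_{k=m}^{n-1} l_k$, which is exactly the claimed inequality.

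It remains to verify $m\le n$, which I would do by contradiction: if $m\ge n+1$ then $\sum_{k=1}^{m-1} l_k\ge\sum_{k=1}^{n} l_k$, and the inequality above would force $\beta(Q_n-A_m)\le-l_n<0$, contradicting $\beta\ge0$. The step I expect to be the main obstacle is this translation of the infimum in the service-curve definition into a statement anchored at a specific arrival time $A_m$. One has to be careful that, with a right-continuous $x$, the infimum over the closed interval $[0,Q_n]$ is the minimum of a finite family of left-limit values and so need not be attained at any single $s$, even though its value is exactly the stated minimum; adopting the left-continuous convention for $x$ removes this friction by making the infimum attained precisely at an arrival time. Everything else reduces to bookkeeping with the monotonicity of $\beta$ and of the partial sums $\sum_k l_k$.
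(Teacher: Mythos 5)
Your overall route is the same as the paper's: bound the served amount at $t=Q_n$ above by $\sum_{k=1}^{n-1}l_k$ using FIFO, bound it below via the service-curve property \eqref{eq:min_service_def}, move the witness time to an arrival instant to extract an index $m$, and kill the case $m>n$ by the sign contradiction $\beta(Q_n-A_m)<0$. That last contradiction step is exactly the paper's.

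There is, however, a genuine gap in the step you yourself flagged as the main obstacle. Your claimed identity
\begin{equation*}
\inf_{0\leq s\leq Q_n}\bigl\{x(s)+\beta(Q_n-s)\bigr\}=\min_{m:\,A_m\leq Q_n}\Bigl\{\textstyle\sum_{k=1}^{m-1}l_k+\beta(Q_n-A_m)\Bigr\}
\end{equation*}
is false in general: the piecewise-non-increasing structure gives as candidates the left limits at the arrival instants \emph{and} the value at the right endpoint $s=Q_n$, namely $x(Q_n)+\beta(0)=x(Q_n)=\sum_{k:\,A_k\leq Q_n}l_k$. This candidate is not of the form $\sum_{k=1}^{m-1}l_k+\beta(Q_n-A_m)$ and can be strictly smaller than every such term (take the last arrival $A_M\leq Q_n$ with $\beta(Q_n-A_M)>l_M$). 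If the infimum were realized there, your chaining step would only yield $x(Q_n)\leq\sum_{k=1}^{n-1}l_k$ and no index $m$ at all. Switching to a left-continuous $x$ does not help, since the offending candidate sits at $s=Q_n$, not at an arrival. The gap is repairable in one line: since packet $n$ has arrived by $Q_n$, $x(Q_n)\geq\sum_{k=1}^{n}l_k>\sum_{k=1}^{n-1}l_k\geq y(Q_n)$, so the endpoint candidate cannot be the one that drops below $y(Q_n)$, and one of the arrival-indexed terms must. The paper avoids the issue entirely by never evaluating the infimum: it uses the existential form of \eqref{eq:min_service_def}, takes the guaranteed witness $s$, rounds it \emph{up} to the smallest arrival time $A_m\geq s$, and uses only the monotonicity of $\beta$ via $\beta(Q_n-s)\geq\beta(Q_n-A_m)$; the case where $s$ lands after $A_n$ is then absorbed by the same $m>n$ contradiction. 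You should either adopt that cleaner route or add the one-line exclusion of the endpoint candidate.
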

	\begin{proof}
		\noindent The network calculus framework uses cumulative functions $I(t)$ and $O(t)$, where $I(t)$ is the number of bits that have arrived up to (excluding) time $t$ and $O(t)$ is the number of bits that have been served up to (excluding) time $t$. Then \cite{le_boudec_network_2001, Changbook} the system is said to offer the flow a service curve $\beta$ if for any $t\geq 0$, there exists an $s\in[0,t]$ such that
		\begin{equation}
		O(t)~\geq~I(s)~+~\beta(t-s). \label{eq:min_service_def}
		\end{equation}
		Let $n$ be some packet index. We have $O(Q_n) = \sum_{i=1}^{n-1} l_i \label{eq:output_t}$, since $Q_n$ is the time at which packet $n$ starts being transmitted and, by the FIFO property, all packets before $n$ have been served by that time.
		
		Now apply \eqref{eq:min_service_def} with $t=Q_n$. For the resulting $s$, let $m$ be the smallest packet index such that, $s \leq A_{m}$ and thus $I(s) = \sum_{i=1}^{m-1} l_i \label{eq:beta_m}$, with the convention that an empty sum is equal to $0$ (which occurs when $m=1$).
		Note that, at this point, we do not know if $n\leq m$ or if $n>m$ (we know that $s\leq Q_n$, but, this does not imply that $m\leq n$, for example it is quite possible that $A_{n+1} < Q_n$ since packets may wait in the queue).  But, in any case, since $s\leq A_m$ and $\beta$ is wide-sense increasing:
		\begin{equation}
		\beta(Q_n-s) \geq \beta(Q_n-A_m). \label{eq:beta_rel}
		\end{equation}
		
		By using \eqref{eq:beta_rel} and the expressions of $O(Q_n)$ and $I(s)$ in \ref{eq:min_service_def}, we obtain:		
		\begin{equation}
		\sum_{i=1}^{n-1} l_i \geq~\sum_{i=1}^{m-1} l_i + \beta(Q_n - A_m).
		\end{equation}
		
		If $m > n$, we obtain $\beta(Q_n - A_m) < 0$, which is a contradiction; therefore, $m\leq n$.
	\end{proof}
	\begin{lemma}\label{lemma:right_lim_arr}
		If $f(.)$ is a wide-sense increasing function and $f^+(.)$ is its right-limit, then for any $R>0$:
		\begin{equation}
		\sup_{t \geq 0}\Big(f^+(t)-Rt\Big) = \sup_{t \geq 0}\Big(f(t)-Rt\Big).
		\end{equation}	\end{lemma}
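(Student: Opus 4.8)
The plan is to prove the two inequalities $\geq$ and $\leq$ separately, relying throughout on the characterisation of the right-limit of a wide-sense increasing function as an infimum: since $f$ is non-decreasing, $f^+(t)=\lim_{s\downarrow t}f(s)=\inf_{s>t}f(s)$ for every $t\geq 0$. This identity is what lets me trade the limit defining $f^+$ for a pointwise comparison.

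First I would dispose of the inequality $\sup_{t\geq 0}(f^+(t)-Rt)\geq \sup_{t\geq 0}(f(t)-Rt)$, which is immediate. For any $s>t$ we have $f(s)\geq f(t)$ by monotonicity, hence $f^+(t)=\inf_{s>t}f(s)\geq f(t)$. Thus $f^+(t)-Rt\geq f(t)-Rt$ holds pointwise, and taking the supremum over $t\geq 0$ on both sides gives the claim.

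For the reverse inequality I would fix an arbitrary $t\geq 0$ and bound $f^+(t)-Rt$ from above by the right-hand supremum, approximating $f^+(t)$ by values of $f$ just to the right of $t$. Concretely, for any $\delta>0$ the point $s=t+\delta$ satisfies $s>t$, so $f(s)\geq \inf_{u>t}f(u)=f^+(t)$, and therefore
\[
\sup_{u\geq 0}\bigl(f(u)-Ru\bigr)\;\geq\; f(s)-Rs \;\geq\; f^+(t)-R(t+\delta)\;=\;\bigl(f^+(t)-Rt\bigr)-R\delta.
\]
Letting $\delta\downarrow 0$ yields $\sup_{u\geq 0}(f(u)-Ru)\geq f^+(t)-Rt$, and since $t$ was arbitrary, taking the supremum over $t\geq 0$ gives $\sup_{u\geq 0}(f(u)-Ru)\geq \sup_{t\geq 0}(f^+(t)-Rt)$. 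Combining the two inequalities establishes the equality.

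The only subtlety worth flagging — and it is minor — is that the right-limit forces us to evaluate $f$ at arguments $s>t$ strictly larger than $t$, which penalises the linear term by $R(s-t)>0$ and a priori pushes $f(s)-Rs$ below $f^+(t)-Rt$. This is exactly why $R>0$ matters: the penalty $R\delta$ can be made arbitrarily small by taking $s$ close to $t$, so it vanishes in the limit and does not affect the supremum. No compactness or continuity assumption on $f$ is needed.
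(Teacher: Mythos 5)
Your proof is correct and follows essentially the same route as the paper's: the easy direction from the pointwise inequality $f(t)\leq f^+(t)$, and the reverse direction by bounding $f^+(t)-Rt$ via $f(t+\delta)-R(t+\delta)+R\delta$ and letting the $R\delta$ penalty vanish. The only cosmetic difference is that you phrase $f^+(t)$ as $\inf_{s>t}f(s)$ while the paper works directly with the bound $f^+(t)\leq f(t+\varepsilon)$; the argument is the same.
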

	\begin{proof}
		Let $K=\sup_{t \geq 0}\Big(f(t)-Rt\Big)$ and $K'=\sup_{t \geq 0}\Big(f^+(t)-Rt\Big)$. We want to prove that $K=K'$. To do so, first we show that $K\leq K'$; and second that $K \geq K'$.
		
		$K\leq K'$: The function $f$ is wide-sense increasing; therefore for any $t\geq 0$, we have:
		\begin{align}
		f(t) \leq f^+(t) \implies f(t)-Rt \leq f^+(t)-Rt .
		\label{eq:fimplic}
		\end{align}
		Using \eqref{eq:fimplic}, it is trivially shown that $K \leq K'$.
		
		$K \geq K'$: The function $f$ is wide-sense increasing; therefore for any $t\geq 0$ and $\varepsilon > 0$, we have $f^+(t) \leq f(t+\varepsilon)$; thus:
\begin{align}
		\nonumber  f^+(t)-Rt &\leq f(t+\varepsilon)-Rt \\
		\nonumber & = f(t+\varepsilon)-R(t+\varepsilon)+ R\varepsilon\\
		& \leq \sup_{u \geq 0}\Big(f(u)-R(u)\Big)+ R\varepsilon
		= K + R\varepsilon,\label{eq:flesss}
\end{align}
		i.e., $K + R\varepsilon$ is an upper bound on $f^+(t)-Rt$.
			By definition, $K'$ is the lowest such upper bound. Thus $K' \leq K + R\varepsilon$. This holds for any $\varepsilon > 0$, thus $K' \leq K$.
	\end{proof}
\begin{remark}
If the input in the FIFO system consists of multiple flows, then the arrival curve, $\alpha(t)$, is an envelope for the aggregate of the flows. If the flows have different minimum packet lengths, then Theorem \ref{thm:response_fifo} provides a distinct delay bound for every flow, $\Delta_{L_{\min}^f}$, where $L_{\min}^f$ is the minimum packet length of flow $f$.
\end{remark}
\textbf{Case Studies:} Hereafter we provide two examples that illustrate the improvement of Theorem \ref{thm:response_fifo} over the network calculus delay bound.

In the first example, we compute delay bounds for two Audio-Video Bridging (AVB) classes in a TSN scheduler \cite{mohammadpour_latency_2018,daigmorte_modelling_2018}. We consider the traffic specification in \cite{zhao_timing_analysis_2018}. Also, the idle slopes of the TSN scheduler for classes A and B are set the same as \cite{zhao_timing_analysis_2018} and are respectively equal to $60\%$ and $15\%$ of the link rate. In addition, we use the rate-latency service curves of Theorem 1 in \cite{mohammadpour_latency_2018}. For a link rate $c=1$ Gbps, we find that the delay bound improvement for a packet of class A with length $1.499$ KB is $8\mu$s per hop. Also, the improvement for a packet of class B with length $1.438$ KB is $66$ $\mu$s per hop. When the link rate is $100$Mbps, the improvement for the packet of class A is $98~\mu$s and for the packet of class B is $736$ $\mu$s per hop. For both link rates the delay bound of Theorem \ref{thm:response_fifo} improves the network calculus bound for classes A and B by around $2\%$ and $10\%$, respectively. The improvement is small but non-negligible. 

In the second example, we consider a node with per-flow queuing and DRR arbitration policy, where $n$ flows share a link with rate $c$. We assume that all flows have the same maximum packet length $L$ and the same quantum value, $Q=L$. Therefore, the rate-latency service curve parameters for all flows are the same given by (Section 9.2.3 of \cite{bouillard2018deterministic}):
{\begin{align}
R&= \frac{Q}{\sum_{j=1}^{n}Q} c= \frac{c}{n},\\
\nonumber T &= {\frac{\sum_{j=1, j\neq i}^{n}(L +Q)}{c}} + L \Big(\frac{1}{R}-\frac{1}{c}\Big) = \frac{3L}{c} (n-1).
\end{align}}
Assume that the maximum burstiness of each flow is limited by its maximum packet length, i.e., $L$. Then, the network calculus bound is $\Delta=\frac{(4n-3)L}{c}$ and the improvement in the delay bound for a packet with length $L$ is $\frac{(n-1)L}{c}$, approximately $25\%$, which is significant.

\section{Tightness of the Improved Delay Bound}\label{sec:tight}
In this section, we prove that Theorem~\ref{thm:response_fifo} is tight for a concave arrival curve $\alpha$. Observe that, since the input is packetized, $\alpha^+(0)$ is an upper bound on the length of any packet (where $\alpha^+$ is the right-limit of $\alpha$). Therefore, we need to assume that $\alpha^+(0)\geq L_{\max}$, where $L_{\max}$ is the maximum packet length. Also, the bound in Theorem~\ref{thm:response_fifo} is of interest only when the network calculus bound is finite. 

\begin{theorem}\label{thm:response_fifo_tight}
	If the network calculus bound in Eq. \eqref{eq:net_calc_bound} is finite, and the arrival curve $\alpha$ is concave such that $\alpha^+(0)\geq L_{\max}$, the bound of Theorem \ref{thm:response_fifo} is tight.
	
	Specifically, consider: a rate $R$, a latency $T$, a maximum packet length $L_{\max}$, a concave arrival curve $\alpha$ such that $\alpha^+(0)\geq L_{\max}$ and the bound in Eq. \eqref{eq:net_calc_bound} is finite, a transmission rate $c\geq R$, and a packet length $l\leq L_{\max}$. There exists a FIFO system where the input is packetized and has arrival curve $\alpha$, which offers a rate-latency service curve guarantee with rate $R$ and latency $T$, and in which there is an execution trace where a packet of length $l$ has a delay equal to \eqref{eq:fifo_response_time_arr_con}.
	\end{theorem}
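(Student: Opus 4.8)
The plan is to run the chain of inequalities in the proof of Theorem~\ref{thm:response_fifo} backwards, exhibiting a single trace in which every inequality used there holds with equality. First I would locate the worst-case horizon: since $\alpha$ is concave and the supremum in \eqref{eq:net_calc_bound} is finite, the concave map $t \mapsto \alpha^+(t) - Rt$ attains its maximum, so call $t^\ast \ge 0$ a maximizer. Writing $\beta^{\uparrow}(x) = x/R + T$ as in the proof of Theorem~\ref{thm:response_fifo} and invoking Lemma~\ref{lemma:right_lim_arr}, this gives $\Delta - \frac{l}{R} = \beta^{\uparrow}\big(\alpha^+(t^\ast) - l\big) - t^\ast$. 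The target trace will have the tagged packet (of length $l$) arrive at time $t^\ast$ at the tail of a greedy burst, be held in the queue for exactly $\Delta - l/R$, and then be emitted at line rate $c$, for a response time $\Delta - l/R + l/c = \Delta_l$.

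Then I would build the arrival process. I place the start of a busy period at time $0$ and let the input saturate the arrival curve, i.e. make the cumulative input track $\alpha^+$ in a packetized fashion over $[0,t^\ast]$, with total $\alpha^+(t^\ast)$ arrived by $t^\ast$ and with a single packet of length $l$ arriving exactly at $t^\ast$ as the last ($n$-th) packet of the burst; no further packets arrive. The lengths of the earlier packets are chosen to be at most $L_{\max}$ and to sum to $\alpha^+(t^\ast) - l$, which is possible, and $l \le L_{\max} \le \alpha^+(0)$ guarantees that a single length-$l$ packet is itself compatible with the arrival curve at the origin. The key feasibility check is that this greedy flow with a terminal jump at $t^\ast$ really respects $\alpha$ for every pair of packets; this is exactly where concavity enters, since a concave $\alpha$ with $\alpha(0)\ge 0$ is sub-additive, so that $\alpha^+(s) + \alpha^+(t^\ast - s) \ge \alpha^+(t^\ast)$, which makes both the burst at $0$ and the terminal jump at $t^\ast$ admissible. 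This renders the arrival-curve inequality \eqref{eq:scheduler_del_9-arr-2} tight for the pair $(m,n) = (1,n)$.

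Next I would specify the scheduler so that it offers exactly $\beta$ and is as slow as possible for the tagged packet. The server stays idle on $[0,T)$ (legal, since $\beta = 0$ there) and on $[T, Q_n]$ emits packets $1,\dots,n-1$ as whole packets at rate $c$, interleaved with idle gaps, so that the cumulative output $O$ stays above $\beta$ while packet $n-1$ finishes exactly at $Q_n \eqdef \frac{\alpha^+(t^\ast)-l}{R}+T$; packet $n$ is then transmitted on $[Q_n, Q_n + l/c]$. Because $c \ge R$, such a whole-packet schedule maintaining $O(t)\ge\beta(t)$ always exists. I would then verify the service-curve property \eqref{eq:min_service_def} for this trace by cases: for $t \le Q_n$ the choice $s=0$ works, since $O(t)\ge\beta(t) = I(0)+\beta(t)$, and once the system has drained the choice $s=t$ works, since $O(t)=I(t)$. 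This confirms both that $\beta$ is genuinely offered and that Lemma~\ref{lemma:min_service_curve} is tight, namely $\beta(Q_n - A_1) = \sum_{k=1}^{n-1} l_k = \alpha^+(t^\ast)-l$. Combining, $Q_n - A_n = Q_n - t^\ast = \beta^{\uparrow}(\alpha^+(t^\ast)-l) - t^\ast = \Delta - l/R$, and $D_n = Q_n + l/c$ yields the claimed response time $\Delta_l$ of \eqref{eq:fifo_response_time_arr_con}.

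The main obstacle is the simultaneous, consistent realization of the two worst cases. The arrival side must saturate $\alpha$ right up to the terminal jump at $t^\ast$, which is feasible only because concavity (sub-additivity of $\alpha^+$) permits the burst-plus-jump shape; the service side must be the minimal rate-latency output yet be expressible as genuine whole-packet transmissions at line rate $c$ that never violate $O \ge \beta$. Reconciling packetization with exact tightness at $t^\ast$ is the delicate part. Secondary care is needed for the degenerate case $t^\ast = 0$ (a pure burst, where $\alpha^+(0)$ rather than $\alpha(0)$ is the relevant value) and for confirming attainment of the supremum so that $t^\ast$ is well defined.
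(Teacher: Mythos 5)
Your construction is correct and is essentially the same as the paper's: pick a maximizer $t'$ of $\alpha^+(t)-Rt$, feed a greedy packetized burst totalling $\alpha^+(t')$ bits with the tagged length-$l$ packet arriving last at $t'$, and serve with the lazy rate-latency schedule where packet $i$ starts at $Q_i=T+\sum_{j<i}l_j/R$ and is emitted whole at rate $c$ (so $O\geq\beta$ since $c\geq R$). The only cosmetic difference is that you justify arrival-curve conformance of the packetized input via sub-additivity of the concave $\alpha$ together with $\alpha^+(0)\geq L_{\max}$, whereas the paper packages the same fact by writing $I=P^L(\mathcal{C}^{\alpha}(P^L(U)))$ and invoking the packetizer theorem (Theorem 1.7.2 of \cite{le_boudec_network_2001}).
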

\begin{proof}	
\textbf{Step 1}.
	We construct a simulation trace.
	
	(a) For the input, we first determine a time instant $t'$ that achieves the network calculus bound $\Delta$ in \eqref{eq:net_calc_bound}.
	Since the arrival curve $\alpha$ is concave, it is continuous, except perhaps at $t=0$. Therefore, there are two cases for $\Delta$: either $\Delta = T+  \frac{\alpha(t')}{R}-t'$ for some $t'>0$ or $\Delta = T+  \frac{\alpha^+(0)}{R}$. In the first case, due to continuity at $t'$,  $\alpha^+(t')=\alpha(t')$; in both cases there is some $t'$ such that:
\begin{equation}\label{eq:delta_tight}
\Delta = T+  \frac{\alpha^+(t')}{R}-t'.
\end{equation}
Then we construct the function $\tilde{I}(t) = \min(\alpha(t), \alpha^+(t'))$, which represents a fluid input that has $\alpha$ as arrival curve and delivers $\alpha^+(t')$ bits in total (the gray line in Fig. \ref{fig:delay_curves}).
\begin{figure}[t]
		\centering
		\includegraphics[width=0.9 \linewidth]{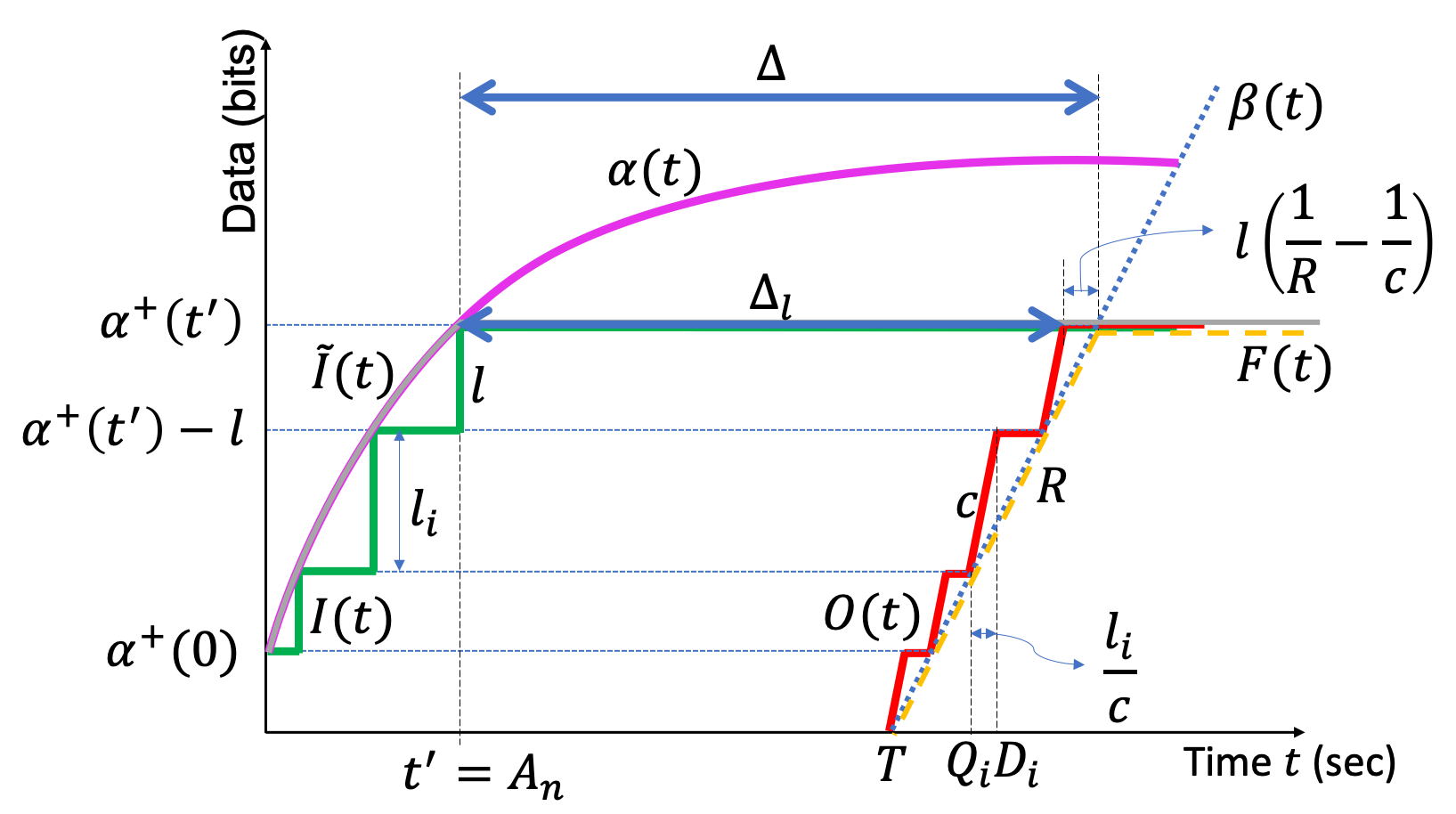}
		\caption{The execution trace used in the proof of Theorem~\ref{thm:response_fifo_tight}. The delay bound of the packet with length $l$ that arrives at time $t'$ is $\Delta_l$.}
		\label{fig:delay_curves}
	\end{figure}

(b) Since the system should have packetized input, we transform $\tilde{I}(t)$ into a train of packets.
	We determine the packet length sequence $\{ l_i\}_{i=1}^n$
	by the conditions: $\sum_{i=1}^{n}l_i=\alpha^+(t')$ and $l_n=l$. This gives:
\begin{align}
&n = \left\lceil \frac{\alpha^+(t') - l}{L_{\max}}\right\rceil+1, \\
&l_i = L_{\max}, ~ i= 1 \dots (n-2), \\
&l_{n-1} = \alpha^+(t') - {(n-2)L_{\max}} - l, l_n=l.
\end{align}
Then we apply the packetizer function $P^L$ and obtain $I(t)= P^L(\tilde{I}(t))$.
The packetizer function $P^L$ (Definition 1.7.3, \cite{le_boudec_network_2001}) is defined by $P^L(x) \eqdef \sup_{j \in N}\{L(j)1_{L(j)\leq x}\}$ with $L(j)\eqdef \sum_{i=1}^j {l_i}$. It transforms the bit stream $\tilde{I}(t)$ (gray line in Fig. \ref{fig:delay_curves}) into entire packets $I(t)$ (green staircase line in Fig. \ref{fig:delay_curves}).
Note that the packet of interest, i.e., packet $n$, arrives at time $t'$.

(c) For the output, we first construct the fluid output curve $F(t)$ (orange dotted-line in Fig. \ref{fig:delay_curves}) given by
\begin{align} \label{eq:output_tight}
F(t) = \inf_{0\leq s \leq t}\left\{I(s)+ \beta(t-s)\right\},
\end{align}
with $\beta(t)=\max(0, R(t-T))$, so that the service curve property in \eqref{eq:min_service_def} would be automatically satisfied if we would let $O(t)=F(t)$. However, we cannot take $O(t)=F(t)$ because $F(t)$ does not satisfy the condition that packet transmission is at rate $c$. In order to obtain the output function $O(t)$, we first observe that under the fluid output $F(t)$, packet $i$ starts transmission at $Q_i =  T+\frac{\sum_{i=1}^{i-1}{l_j}}{R}$ and finishes at $Q_i + \frac{l_i}{R}$. For the output function $O(t)$, we keep the same time $Q_i$ for the start of transmission of packet $i$, but, we let the transmission finish at time $D_i = Q_i + \frac{l_i}{c}$ ($O(t)$ is  the red line in Fig. \ref{fig:delay_curves}). Obviously, $O(t) \geq F(t) \text{ for all } t\geq 0$, therefore, the service curve property is also satisfied when the output is $O(t)$.

\textbf{Step 2.} We verify that all requirements in the theorem are satisfied. We have already shown the service curve property. Furthermore, by construction the system is FIFO, the input is packetized and packet transmission occurs at rate $c$.
	
We also need to show that the input $I(t)$ has $\alpha$ as arrival curve. Observe here that this is true by construction for the fluid input $\tilde{I}(t)$, but is less obvious for $I(t)$ since we applied the packetizer. This is where we need the assumption that the arrival curve $\alpha(t)$ is concave. We use Section 1.7.3 of \cite{le_boudec_network_2001}. Define the step function $U$ by $U(t)=\alpha^+(t')$ for all $t>0$, and $U(0)=0$. Then we have $\tilde{I}(t)=\mathcal{C}^{\alpha}\left(P^L\left(U\left(t\right)\right)\right)$, where $\mathcal{C}^{\alpha}$ is the shaper with shaping curve $\alpha$. By definition, we have $I(t)=P^L(\tilde{I}(t))=P^L\left(\mathcal{C}^{\alpha}\left(P^L\left(U\left(t\right)\right)\right)\right)$. Thus, by Theorem 1.7.2 in \cite{le_boudec_network_2001}, $I(t)$ has $\alpha$ as arrival curve.

Last, we show that packet $n$ achieves the delay bound. We have $Q_n =  T+\frac{\sum_{i=1}^{n-1}{l_j}}{R} = T + \frac{\alpha^+(t')-l}{R}$. Furthermore, $A_n = t'$ and $D_n=Q_n+\frac{l}{c}$, therefore,
\begin{align}\label{eq:queuing_tight2}
 D_n -A_n&= T+\frac{\alpha^+(t')-l}{R} + \frac{l}{c} -t'= \Delta -\frac{l}{R} + \frac{l}{c}.
\end{align}
\end{proof} 
\section{Conclusion}\label{sec:conclusion}
We considered a network element that offers a rate-latency service curve and has a known transmission rate larger than the rate guaranteed by the service curve. 
We have obtained a delay bound that uses both the information of the rate-latency service curve and the constant transmission rate. It improves on the existing network calculus bound by an amount that depends on the length of the packet being transmitted. The improvement is larger for larger packet lengths.

\bibliographystyle{IEEEtran}
\vspace{-0.05in}
\bibliography{ref,leb}


\end{document}